\documentclass[aps,superscriptaddress,twocolumn,twoside,floatfix,pra,nofootinbib,a4paper]{revtex4-2}

\usepackage{amssymb}

\usepackage{dcolumn}
\usepackage{bm}
\usepackage{dsfont}
\usepackage{graphicx,epsfig}
\usepackage{amsmath}
\usepackage{braket}
\usepackage{physics}
\usepackage{caption}
\usepackage{float}
\usepackage[shortlabels]{enumitem}
\usepackage{amsthm}
\usepackage{soul}

\usepackage{blindtext}
\usepackage{color}

\usepackage[colorlinks]{hyperref}
\hypersetup{
	colorlinks = true,
	urlcolor = {blue},
	citecolor = {magenta},
	linkcolor= {blue}
}

\renewcommand{\eqref}[1]{Eq.~(\ref{#1})}
\newcommand{\figref}[1]{Fig.~\ref{#1}}

\newcommand{\appref}[1]{App.~\ref{#1}}
\newcommand{\secref}[1]{Sec.~\ref{#1}}

\newtheorem{observation}{Observation}
\newtheorem{result}{Result}

\newtheoremstyle{normaldef}  
  {3pt}   
  {3pt}   
  {}      
  {}      
  {\normalfont} 
  {.}     
  { }     
  {}      

\theoremstyle{normaldef}

\begin{document}

\title{
Characterising the set of deterministic quantum correlations\\ 
in prepare-and-measure scenarios
}

\author{Nicola D'Alessandro}
\address{Physics Department and NanoLund, Lund University, Box 118, 22100 Lund, Sweden.}

\author{Oliver Karlsson}
\address{Physics Department and NanoLund, Lund University, Box 118, 22100 Lund, Sweden.}

\author{Carles Roch i Carceller}\email{carles.roch@icfo.eu}
\address{Physics Department and NanoLund, Lund University, Box 118, 22100 Lund, Sweden.}
\address{ICFO - Institut de Ciencies Fotoniques, The Barcelona Institute of Science and Technology, 08860 Castelldefels, Spain.}

\begin{abstract}

Correlations that do not admit a deterministic explanation are a central feature of quantum theory and a key resource for quantum information processing. Identifying and certifying such correlations, however, remains a fundamental challenge. In this work, we advance on this problem by considering deterministic correlations in the prepare-and-measure scenario consistent with a wide range of communication restrictions. To certify correlations incompatible with such deterministic models, we ask whether they admit a scenario in which measurement outcomes can be perfectly predicted by an adversary equipped with classical side-information. We show the usefulness of this approach and propose semidefinite programming relaxations tailored to three representative communication restrictions: fixed ensemble, bounded overlaps and restricted observables.

\end{abstract}

\maketitle

\section{Introduction}

Quantum theory predicts correlations that admit no deterministic explanation: measurement outcomes cannot, in general, be understood as revealing pre-existing properties of physical systems. This intrinsic non-determinism is a hallmark of non-classicality, and a key resource for quantum information processing. Characterizing the boundary between deterministic and quantum correlations is therefore a central problem in quantum information science.

This question has been extensively investigated in Bell scenarios, where spatially separated parties generate correlations that cannot be explained by local hidden-variable models \cite{Bell1964,CHSH1969}. However, Bell experiments require demanding resources, including entanglement distribution and loophole-free implementations \cite{Larsson2014}. An alternative framework is provided by the prepare-and-measure scenario, in which a system is prepared and subsequently measured \cite{brask2026}. Despite their simplicity, such scenarios can reveal correlations that cannot be reproduced by classical models under suitable communication constraints \cite{spekkens2005,gallego2010}. This has made them a versatile platform for a wide list of applications, including quantum communication \cite{pauwels2025}, randomness certification \cite{li2011} and self-testing \cite{tavakoli2018,navascues2023}.

The role of non-determinism in the prepare-and-measure scenario is closely connected to its role in Bell non-locality. Any deterministic non-signaling behavior necessarily admits a local hidden-variable model \cite{Jarrett1984,Shimony1993,popescu1994}. Consequently, every non-local behavior is intrinsically non-deterministic, which motivates the certification of non-classicality as the exclusion of deterministic explanations.

Drawing inspiration from the above, we introduce an approach for certifying non-classicality in the prepare-and-measure scenario by asking whether the observed correlations admit a deterministic explanation under certain communication restrictions. To address this question, we introduce an adversary equipped with classical side-information. Then, we say that the observable correlations admit a deterministic model compatible with the given communication restriction whenever the adversary is able to perfectly predict the value of the measurement outcomes.


The certification approach we employ is complementary to quantum random number generation (QRNG). Rather than quantifying the unpredictability compatible with observed correlations, we ask whether those correlations admit a model with predictable outcomes. Existing approaches for semi-device-independent QRNG, which can be treated with efficient numerical techniques based on semidefinite programming (SDP) \cite{tavakoli2024}, can therefore be adapted into feasibility tests for deterministic explanations. This provides a systematic route to study determinism beyond standard dimension restrictions. We demonstrate its usefulness with three representative communication restrictions: fixed ensemble, bounded overlaps and restricted observables. Finally, we employ the adversarial perspective to derive analytical bounds for generic linear witnesses.

\section{The prepare-and-measure scenario}

Consider an experiment involving two parties: Alice and Bob. In each round, a shared random variable $\lambda$ is distributed with probability $q(\lambda)$. Alice then selects an input $x\in X$, prepares a quantum state $\rho_x^\lambda$, and sends it to Bob. Bob then chooses a measurement setting $y\in Y$ and performs a measurement represented by the positive operator-valued measure (POVM) $\{M_{b|y}^\lambda\}_{b}$, producing an outcome $b\in \mathcal{B}$. The observable events in the experiment averaged over $\lambda$ are specified by the triplets $(b,x,y)$, which are collected over many rounds to define the conditional probabilities,
\begin{align}
    p(b|x,y) = \sum_\lambda q(\lambda)\tr\!\left(\rho_x^\lambda M_{b|y}^\lambda\right) .
\end{align}
It is well known that certain conditional distributions $p(b|x,y)$ can exhibit genuinely quantum features with no classical analogue. The standard notion of classicality is based on models in which Alice is restricted to sending a classical message $m=1,2,\ldots,d$. The resulting correlations take the form
\begin{align}\label{eq:SC}
    p(b|x,y) = \sum_\lambda q(\lambda) \sum_{m=1}^{d} p_\lambda(m|x) p_\lambda(b|m,y) .
\end{align}
The relevance of the classical decomposition in \eqref{eq:SC} stems from constraining the message alphabet size $d$ \cite{gallego2010}. Classicality has also been investigated under alternative communication constraints, including energy \cite{VanHimbeeck2017}, information content \cite{tavakoli2022quantum} and non-contextuality \cite{schmid2018,flatt2022}. However, these approaches are tailored to specific scenarios and lack a systematic general framework.

Here, we aim to certify whether the observable correlations admit a deterministic model consistent with any given communication restriction. The approach we use is based on a task complementary to semi-device-independent QRNGs: given $p(b|x,y)$ and a communication restriction, does there exist a model in which the outcomes can be perfectly predicted? If such a model exists, the correlations admit a deterministic explanation.

A key advantage of this perspective is that it accommodates a wide range of communication restrictions, untied to specific settings. Recent advances in semi-device-independent QRNG already considered a broad range of assumptions, including restricted dimension \cite{li2011}, state overlaps \cite{brask2017}, target-state fidelities \cite{tavakoli2021}, energy constraints \cite{vanhimbeeck2019}, photon-number components \cite{carceller2025_photon}, rotational symmetries \cite{aloy2024}, information-content \cite{tavakoli2020}, non-contextuality \cite{carceller2022} and quantum speed limits \cite{jones2026}. 

Motivated by these developments, here we develop an approach for certifying non-classicality that works under any relevant communication restriction. For illustration purposes however, we consider only three representative examples. First, we assume that Alice prepares the known ensemble $\mathcal{E}=\{\rho_x\}$ on average, such that $\rho_x=\sum_\lambda q(\lambda)\rho_x^\lambda$ $\forall x$. Second, we consider a weaker characterization in which the preparations are constrained only through lower-bounds on their pairwise overlaps, with $\rho_x^\lambda=\ketbra*{\psi_x^\lambda}{\psi_x^\lambda}$ such that $\sum_\lambda q(\lambda) \abs*{\braket*{\psi_{x}^\lambda}{\psi_{x'}^\lambda}} \geq \beta_{xx'}$. Finally, we consider more versatile frameworks where an observable determined by the set of hermitian operators $\{O_{ix}\}_{x,i}$ is bounded by $\sum_\lambda q(\lambda) \tr(\rho_x^\lambda O_{ix})\leq \omega_{ix}$.

The three assumptions we consider are summarized as follows:

\begin{list}{}{
  \setlength{\leftmargin}{1.1cm}
  \setlength{\labelwidth}{1cm}
  \setlength{\labelsep}{0.2cm}
  \setlength{\itemindent}{0cm}
  \setlength{\itemsep}{0.2cm}
}
    \item[$\mathcal{S}_{\mathcal{E}}$:] Fixed ensemble:
    $\mathcal{E}=\{\rho_x\}$, with $\rho_x=\sum_\lambda q(\lambda)\rho_x^\lambda$.
    
    \item[$\mathcal{S}_{\beta}$:] Bounded overlaps: $\sum_\lambda q(\lambda) \ \abs*{\!\braket*{\psi_x^\lambda}{\psi_{x'}^\lambda}}\geq \beta_{xx'}$.

    \item[$\mathcal{S}_{O}$:] Restricted observable: $\sum_\lambda q(\lambda) \tr(\rho_x^\lambda O_{ix})\leq \omega_{ix}$.
    
\end{list}
Let us collect all assumptions in the list $\mathcal{S}=\{\mathcal{S}_{\mathcal{E}},\mathcal{S}_{\beta},\mathcal{S}_{O}\}$. We define the set of quantum correlations 
\begin{align}\label{eq:Q_S}
\mathcal{Q}_{\mathcal{S}_i} = \left\{ p(b|x,y)=\sum_\lambda q(\lambda) \tr\!\left(\rho_x^\lambda M_{b|y}^{\lambda}\right) \mid  \mathcal{S}_i \right\} \ ,
\end{align}
which are compatible with the restriction $\mathcal{S}_i\in\mathcal{S}$.
The notion of classicality employed in this work is based on the emergence of determinism for a particular sub-set of inputs $X^\ast \subseteq X$ and $Y^\ast \subseteq Y$. More concretely, we say that the observable correlation $p\in\mathcal Q_{\mathcal{S}_i}$ is classical for $x\in X^\ast$ and $y\in Y^\ast$ if the single-round conditional probabilities admit a deterministic explanation, i.e.~$\tr\small(\rho_x^\lambda M_{b|y}^{\lambda}\small)=D_{\lambda}(b|x,y)$, where $D_{\lambda}(b|x,y)=\{0,1\}$. With that, we define the set of classical correlations associated to the inputs $(x,y)$ as,
\begin{align}
\mathcal{C}_{\mathcal{S}_i}^{(x,y)} = \Biggl\{ \sum_\lambda q(\lambda) D_\lambda(b|x,y) \mid D_\lambda(b|x,y) \in \mathcal{Q}_{\mathcal{S}_i} \Biggr\} . 
\end{align}
It is not difficult to see that any behavior satisfying $p(b|x,y)\in \mathcal{C}_{\mathcal{S}_i}^{(x,y)}$, $\forall x,y\in X,Y$, will always admit the decomposition from \eqref{eq:SC}. The converse, however, is only true when the Hilbert space dimension becomes the relevant constraint. The set $\mathcal{C}_{\mathcal{S}_i}^{(x,y)}$ is, however, not bounded to dimensional assumptions, and may be considered as a generalized notion of classicality compatible with alternative communication restrictions. 

\begin{figure}
\includegraphics[width=0.45\textwidth]{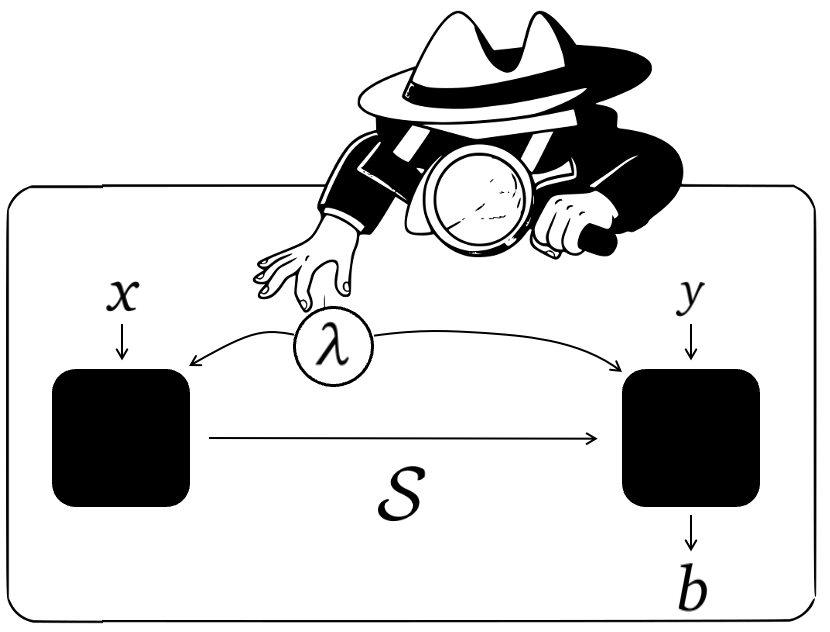}
\caption{\textit{Adversarial prepare-and-measure scenario.} The adversary is correlated through the hidden-variable $\lambda$ distributed among the devices. The communication is bound to satisfy the restriction $\mathcal{S}$.}
\label{fig:pam_spy}
\end{figure}

\section{Adversarial scenario}

We now introduce an adversary, Eve, with the goal of guessing the measurement outcomes, see \figref{fig:pam_spy}. Eve is assumed to have complete knowledge of the random variable $\lambda$, formalized as follows. We may assume, without loss of generality, that the preparations in each round are pure, namely $\rho_x^\lambda=\ketbra*{\psi_x^\lambda}{\psi_x^\lambda}$, as any mixture can be absorbed by $\lambda$. With that, we expand the Hilbert space to define the purified quantum state $\ket{\Psi_x} = \sum_\lambda \sqrt{q(\lambda)} \ket{\psi_x^\lambda} \otimes \ket{\lambda} \otimes \ket{\lambda}_{\mathcal{H}_E}$ and measurements $B_{b|y} = \sum_\lambda M_{b|y}^{\lambda} \otimes \ketbra{\lambda}{\lambda}$, where $\mathcal{H}_E$ is the dilated Hilbert space accessible to Eve. This way, Eve may perform a measurement $\{E_\lambda=\ketbra{\lambda}{\lambda}\}_\lambda$ in $\mathcal{H}_E$ to obtain full knowledge of the random variable $\lambda$. The probability that Eve and Bob respectively observe $\lambda$ and $b$, given the choice of preparation and measurement settings $(x,y)$ is given by,
\begin{align}\label{eq:glob_corr_pam}
p(\lambda,b|x,y) \!= \bra{\Psi_x}\!B_{b|y}\!\otimes\! E_{\lambda}\!\ket{\Psi_x} \!=\! q(\lambda) \tr\!\left(\rho_x^\lambda M_{b|y}^{\lambda}\right) . 
\end{align}
The goal of Eve is to predict Bob's measurement outcome using her knowledge of the shared randomness. The probability that Eve succeeds in this task is obtained by optimizing over all realizations compatible with $\mathcal{S}$ that reproduce the observed distribution $p(b|x,y)$. For a fixed set of inputs $(x,y)$, this is given by the optimal guessing probability 
\begin{align}\label{eq:pg_ensemble}
p_g^{(x,y)} = \max_{p\in \mathcal{Q}_{\mathcal{S}_i}} \sum_\lambda q(\lambda) \max_b\!\left\{p_\lambda(b|x,y)\right\},
\end{align}
where $p_\lambda(b|x,y)$ denotes the observable conditional probabilities for a concrete choice of $\lambda$. If the observable behavior admits a classical explanation for a particular choice of inputs $(x,y)$, then Bob's measurement outcomes must be perfectly predictable by any adversary with access to the shared randomness that governs the underlying deterministic mechanism. Our goal, however, is to use a complementary argument to detect non-classical correlations, summarized in the following observation.

\begin{observation}[Predictable is classical] \label{obs:pg_det}
The correlations produced in the prepare-and-measure scenario for a concrete choice of input settings $(x,y)$ admit a classical model iff $p_g^{(x,y)}=1$.
\end{observation}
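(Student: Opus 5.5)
The plan is to prove the two implications separately. Throughout, "classical model" means that the observed $p(b|x,y)$ lies in $\mathcal{C}_{\mathcal{S}_i}^{(x,y)}$. The guessing probability \eqref{eq:pg_ensemble} is understood as a supremum over realizations $\{q(\lambda),\rho_x^\lambda,M_{b|y}^\lambda\}$ that satisfy $\mathcal{S}_i$ and reproduce the observed $p(b|x,y)$. The key identity is that, for each $\lambda$, $p_\lambda(b|x,y)=\tr(\rho_x^\lambda M_{b|y}^\lambda)$ is a probability distribution over $b$. Hence $\max_b p_\lambda(b|x,y)\le 1$, with equality iff that distribution is deterministic, i.e.\ $p_\lambda(b|x,y)=D_\lambda(b|x,y)\in\{0,1\}$.

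\emph{Classical $\Rightarrow p_g^{(x,y)}=1$.} Suppose $p(b|x,y)=\sum_\lambda q(\lambda)D_\lambda(b|x,y)$ with each deterministic $D_\lambda$ realized by some $\rho_x^\lambda, M_{b|y}^\lambda$ compatible with $\mathcal{S}_i$. This realization is a feasible point of the optimization in \eqref{eq:pg_ensemble}. Using the purification construction of the Adversarial scenario section, Eve measures $E_\lambda$ and outputs the unique $b$ with $D_\lambda(b|x,y)=1$. The objective then equals $\sum_\lambda q(\lambda)\cdot 1=1$. Since the objective is always bounded by $1$, we get $p_g^{(x,y)}=1$.

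\emph{$p_g^{(x,y)}=1 \Rightarrow$ classical.} Take a feasible realization attaining $\sum_\lambda q(\lambda)\max_b p_\lambda(b|x,y)=1$. Each term satisfies $\max_b p_\lambda\le 1$ and $\sum_\lambda q(\lambda)=1$, so $\max_b p_\lambda(b|x,y)=1$ for every $\lambda$ with $q(\lambda)>0$. Drop the $\lambda$ with $q(\lambda)=0$; this does not affect $p$ or the averaged constraints. For each remaining $\lambda$ the outcome distribution is then deterministic. Its decomposition $p(b|x,y)=\sum_\lambda q(\lambda)D_\lambda(b|x,y)$ uses exactly the states and measurements of a model compatible with $\mathcal{S}_i$, so $p\in\mathcal{C}_{\mathcal{S}_i}^{(x,y)}$.

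The main obstacle is attainment. If the feasible set is not compact, for instance because of unbounded Hilbert space dimension or an unbounded number of values of $\lambda$, then $p_g^{(x,y)}=1$ might only be reached as a supremum. The first approach would be to fix the dimension and use Carathéodory's theorem to bound the number of $\lambda$'s needed. The feasible set is then a continuous image of a compact set, and the maximum is attained. Failing that, I would restrict the equivalence to the closure of $\mathcal{C}_{\mathcal{S}_i}^{(x,y)}$. A secondary subtlety is the convex constraints $\mathcal{S}_\beta$ and $\mathcal{S}_O$. These hold on average over $\lambda$ rather than per $\lambda$, so the membership condition "$D_\lambda\in\mathcal{Q}_{\mathcal{S}_i}$" must be read as compatibility of the whole ensemble with $\mathcal{S}_i$. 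I would state this reading explicitly so that both directions use the same feasible set.
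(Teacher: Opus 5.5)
Your proposal is correct and follows the same argument as the paper: the forward direction uses the deterministic realization itself as a feasible point, and the converse uses that $\max_b p_\lambda(b|x,y)\le 1$ together with $\sum_\lambda q(\lambda)=1$, which forces every $\lambda$ in the support to give a deterministic outcome. The paper's proof does not address attainment (it simply writes the guessing probability as a $\max$), nor whether $D_\lambda\in\mathcal{Q}_{\mathcal{S}_i}$ is read per $\lambda$ or on average; your remarks on these two points add rigor to the same argument rather than changing it.
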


\begin{proof}
That $p(b|x,y)\in\mathcal{C}_{\mathcal{S}_i}^{(x,y)}$ guarantees $p_g^{(x,y)}=1$. What we need to show thus is the converse. But indeed, the guessing probability can only be one if $\max_b\left\{p_\lambda(b|x,y)\right\}=1$, $\forall \lambda$. This implies that the distribution $p_\lambda(b|x,y)$ corresponds to a deterministic assignment, thereby rendering $p(b|x,y)\in\mathcal{C}_{\mathcal{S}_i}^{(x,y)}$.
\end{proof}

\section{Certification methods}

By Observation~\ref{obs:pg_det}, certifying non-classicality becomes equivalent to excluding models that reproduce the observed behavior with predictable measurement outcomes. This complementarity with randomness certification enables existing QRNG techniques to be adapted for non-classicality detection under various communication constraints. Here, we explore this idea with the three restrictions $\mathcal{S}=\{\mathcal{S}_{\mathcal{E}},\mathcal{S}_{\beta},\mathcal{S}_{O}\}$ introduced above. We outline three search problems in the form of SDPs tailored to each restriction, whose infeasibility certifies non-classicality for the sub-sets inputs $X^\ast\subseteq X$ and $Y^\ast\subseteq Y$.

\subsection{Fixed ensemble $(\mathcal{S}_{\mathcal E})$}

To determine whether the observed correlations $p(b|x,y)$ generated by the ensemble $\mathcal E=\{\rho_x\}$ admit a classical explanation, one needs to search for a decomposition of the ensemble into hidden-variable states $\{\rho_x^\lambda\}$ that reproduce the observed correlations with predictable outcomes. To address this problem we revisit the block moment matrix (BMM) approach introduced in Ref.~\cite{dalessandro2026}. Let $L^\lambda=\{\mathds{1},\{\rho_x^\lambda\}_x,\{M_{b|y}^\lambda\}_{b,y}\}$ be the set of relevant operators for each $\lambda$, and let $S^\lambda\supseteq L^\lambda$ contain products of elements of $L^\lambda$. With that, we construct the BMM with entries $\Gamma^\lambda_{u,v}=q(\lambda)\,u v^\dagger$, for $u,v\in S^\lambda$. Imposing the operator identities associated with the underlying quantum model yields an SDP relaxation of the decision problem. As we show in \appref{app:ensemble}, this problem reduces to the feasibility program
\begin{align}
\text{find} & \quad \{\Gamma^{\bm{\lambda}}\}_{\bm{\lambda}} \label{eq:SDPrelax_ensemble} \\
\text{s.t.} & \quad \Gamma^{\bm{\lambda}}\succeq 0, \quad \sum_{\bm{\lambda}}\Gamma^{\bm{\lambda}}_{\rho_x,\mathds{1}}=\rho_x \nonumber \\
& \quad p(b|x,y)=\sum_{\bm{\lambda}} \tr(\Gamma^{\bm{\lambda}}_{\rho_x, M_{b|y}})\nonumber \\
& \quad \sum_{\bm{\lambda}} \tr(\Gamma^{\bm{\lambda}}_{\rho_x, M_{\lambda_{xy}|y}})=1, \quad x\in X^\ast, \ y\in Y^\ast \nonumber ,
\end{align}
where $\bm{\lambda}=\{\lambda_{xy}\}_{x,y}$ indexes all $|\mathcal B|^{|X^\ast||Y^\ast|}$ deterministic assignments specifying the outcome $b=\lambda_{xy}$ for each $x\in X^\ast$ and $y\in Y^\ast$.

\subsection{Bounded overlaps $(\mathcal{S}_{\beta})$}

We now consider the case in which the preparations are characterized only through lower-bounds on their pairwise overlaps. To address this, we employ the Gram matrix SDP hierarchy introduced in Ref.~\cite{wang2019}. Let $S=\{\ket{\Psi_x},\,B_{b|y}\ket{\Psi_x},\,E_\lambda\ket{\Psi_x}\}$, with $[B_{b|y},E_\lambda]=0$, and define the Gram matrix $G$ through the entries $G_{u,v}=\braket{u}{v}$, for $\ket{u},\ket{v}\in S$. As shown in \appref{app:overlaps}, the overlap assumptions, the observed correlations, and the predictability constraints all become linear relations on $G$. The decision problem can thus be rendered through the SDP relaxation
\begin{align}\label{eq:SDP_gamm_det}
\text{find} & \quad G \\
\text{s.t.} & \quad G\succeq 0, \quad G_{\ket{\Psi_x},\ket{\Psi_{x'}}} \geq \beta_{xx'} \nonumber\\
& \quad p(b|x,y) = G_{\ket{\Psi_x},B_{b|y}\ket{\Psi_x}} \nonumber\\
& \quad \sum_b G_{E_b\ket{\Psi_x},B_{b|y}\ket{\Psi_x}}=1, \quad x\in X^\ast, y\in Y^\ast \nonumber ,
\end{align}
where the precise construction of the Gram matrix is detailed in \appref{app:overlaps}.

\subsection{Restricted observables $(\mathcal{S}_{O})$}

Finally, we consider scenarios characterized by constraints on observable expectation values. Such assumptions can be incorporated through the tracial moment-matrix SDP hierarchy employed in Refs.~\cite{pauwels2022,carceller2025_photon}. Let $L^\lambda=\{\mathds{1},\{\rho_x^\lambda\}_x,\{M_{b|y}^\lambda\}_{b,y},\{O_{ix}\}_{i,x}\}$ and let $S^\lambda\supseteq L^\lambda$ contain products of elements of $L^\lambda$. We introduce the moment matrices with entries $\Upsilon^\lambda_{u,v} = q(\lambda) \tr(uv^\dagger)$, for $u,v\in S^\lambda$. The operator relations defining the quantum model translate into linear constraints on the matrix entries. In particular, the observed statistics and observable bounds are represented by linear relations on elements in $\Upsilon^\lambda$. 
As we show in \appref{app:observables}, the resulting feasibility problem becomes
\begin{align}
\text{find} & \quad \{\Upsilon^{\bm{\lambda}}\}_{\bm{\lambda}} \label{eq:SDPrelax_obs} \\
\text{s.t.} & \quad \Upsilon^{\bm{\lambda}}\succeq 0, \quad  \sum_{\bm{\lambda}}\Upsilon^{\bm{\lambda}}_{\rho_x,O_{ix}}\leq\omega_{ix}, \nonumber \\
& \quad p(b|x,y)=\sum_{\bm{\lambda}} \Upsilon^{\bm{\lambda}}_{\rho_x, M_{b|y}}, \nonumber \\
& \quad \sum_{\bm{\lambda}} \Upsilon^{\bm{\lambda}}_{\rho_x, M_{\lambda_{xy}|y}} = 1, \quad x\in X^\ast, \ y\in Y^\ast \nonumber ,
\end{align}
where $\{\Upsilon^{\bm{\lambda}}\}_{\bm{\lambda}}$ is a family of moment matrices consistent with the underlying operator relations.

\begin{figure*}
    \centering
    \includegraphics[width=\linewidth]{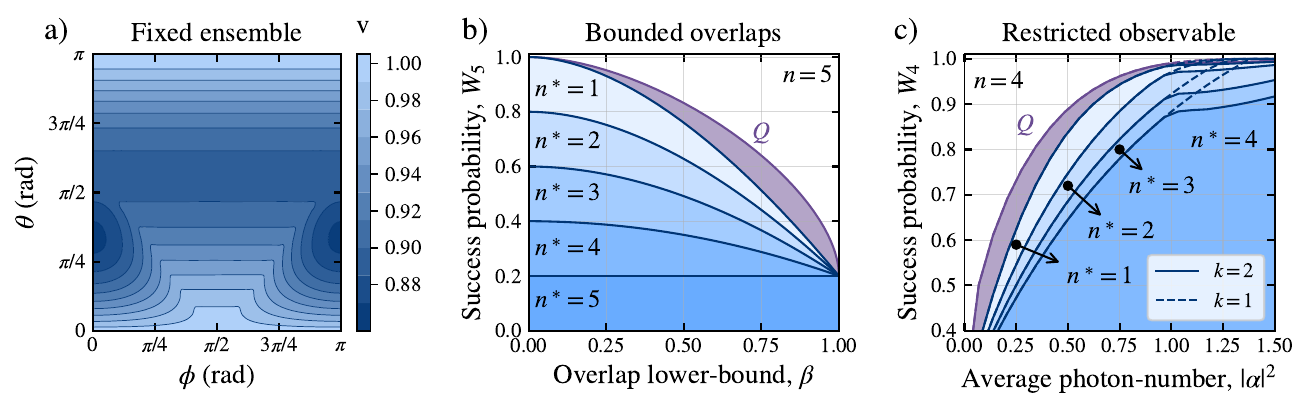}
    \caption{\textit{Classical models based on predictable measurement outcomes.} \textbf{a)} Critical visibility above which two qubit states cannot generate classical correlations. \textbf{b)} Value of $W_5$ above which a classical model is ruled out for $n^\ast$ inputs and preparations with bounded pair-wise overlaps. \textbf{c)} Value of $W_4$ above which a classical model is ruled out for $n^\ast$ inputs, and preparations with $k=1$ and $k=2$ photon-number components restricted by Poissonian statistics.}
    \label{fig:ENS_QSD_PN}
\end{figure*}

\section{Applications and results}

Equipped with the certification methods tailored to each communication restriction, we now assess their usefulness in various representative examples.

\subsection{Classicality with a fixed ensemble} 

The strength of the correlations in the prepare-and-measure scenario is often quantified through communication tasks whose performance is described by a linear witness,
\begin{align}\label{eq:pam_witness}
    W=\sum_{b,x,y}c_{bxy}p(b|x,y).
\end{align}
The coefficients $c_{bxy}$ specify the task under consideration. We begin by deriving a general bound on witnesses of the form \eqref{eq:pam_witness} under the assumption that the observed correlations admit a deterministic model for every preparation and measurement choice.

\begin{result}\label{res:max_W_E}
    Consider the linear witness in \eqref{eq:pam_witness}.
    For any ensemble of pure states $\mathcal{E}=\{\rho_x\}_x$, let the non-orthogonality graph have the preparations as vertices, with different preparations $x$ and $x'$ connected whenever $\rho_x$ and $\rho_{x'}$ are non-orthogonal. If the graph is connected, then any correlations $p(b|x,y)$ that admit a deterministic model satisfy
    \begin{align}
        W \leq \sum_{y} \max_{b}\left\{\sum_{x} c_{bxy}\right\} .
    \end{align}
\end{result}


\begin{proof}
    From Observation \ref{obs:pg_det}, classical behaviors for $x,y$ correspond to $p_g^{(x,y)} = 1$. For that to hold $\forall x,y$, we introduce the deterministic assignments $\bm{\lambda}:=\{\lambda_{xy}\}_{xy}$ which specify the outcomes $b=\lambda_{xy}$ for any input choice. Since the ensemble is composed of pure states, the preparation associated with each $\bm{\lambda}$ coincides with $\rho_x$. The condition $p_g^{(x,y)} = 1$ $\forall x,y$ thus implies that for every $\bm{\lambda}$ in the support of $q(\bm{\lambda})$, $\tr\small(\rho_x M_{\lambda_{xy}|y}^{\bm{\lambda}}\small) = 1$. 
    Such a decomposition is only possible if the hidden-variable assignments does not distinguish between preparation corresponding to non-orthogonal states. Thus, for every such pair $x,x'$, $\lambda_{xy} = \lambda_{x'y}$, $\forall y$. Since the non-orthogonality graph is connected, the equality propagates in the graph and the argument extends to all $x,x'$ pairs, including orthogonal ones. Hence $\lambda_{xy} = \lambda_{x'y} := \lambda_y$, $\forall x,x',y$. Substituting this restriction into the witness $W$, one obtains $\tr \small(\rho_x M_{b|y}^{\bm{\lambda}} \small) = \delta_{b, \lambda_{y}}$. The witness simplifies to $W = \sum_{\bm{\lambda}} q(\bm{\lambda}) \sum_{x,y} c_{\lambda_{y},x,y} \leq \max_{\bm{\lambda}} \sum_{x,y} c_{\lambda_{y},x,y}=\sum_y \max_{\lambda_y} \sum_{x} c_{\lambda_{y},x,y}$. Re-labeling $\lambda_y=b$, yields the claimed bound.
\end{proof}

We illustrate Result \ref{res:max_W_E} using a witness constructed from mutually unbiased bases (MUBs). Consider a preparation device producing states $\{\rho_{x_0x_1}\}$, where $x_1\in\{1,\ldots,n\}$ labels the basis and $x_0\in\{1,\ldots,d\}$ labels the basis element. The preparations satisfy $\tr(\rho_{x_0 x_1}\rho_{x'_0 x'_1}) = \delta_{x_0,x'_0}\frac{1}{d}$ for $x_1 \neq x'_1$. Similarly, let $\{M_{b|y}\}$ denote a collection of MUB measurements, where $y$ labels the basis and $b$ the measurement outcome. We consider the witness
\begin{align}
    W_{n,d} = \sum_{x_0=1}^{d} \sum_{x_1=1}^{n} \operatorname{tr}\!\left(\rho_{x_0 x_1} M_{x_0|x_1}\right),
\end{align}
corresponding to the coefficients $c_{bxy}=\delta_{b,x_0}\delta_{y,x_1}$. The quantum maximum is attained when preparations and measurements correspond to the same set of MUBs, yielding $W^{Q}_{n,d} = nd$. In contrast, applying Result \ref{res:max_W_E} immediately gives $W^{\rm Cl}_{n,d} = n$, which is strictly smaller than the quantum value for every $d>1$. Hence, the witness certifies non-classicality whenever $W_{n,d}>n$, which can be reached with detectors with an efficiency above $\eta^\ast = \frac{1}{d}$.

We complement our analysis by numerically determining the white-noise critical visibilities, $v\rho_x+\frac{1-v}{d}\mathds{1}$, above which the qubit ensembles $\rho_x=\ketbra*{\psi_x}{\psi_x}$ with $\ket{\psi_1}=\ket{0}$ and $\ket{\psi_2}=\sin\frac{\theta}{2}\ket{0}+e^{i\phi}\cos\frac{\theta}{2}\ket{1}$, measured along the $\sigma_x$ and $\sigma_z$ directions, does not admit classical models. This is done by solving the SDP relaxation in \eqref{eq:SDPrelax_ensemble} for all choices of input subsets $X^\ast$ and $Y^\ast$, retaining the lowest value. The resulting critical visibilities are shown in \figref{fig:ENS_QSD_PN}. \\

\subsection{Classicality with bounded overlaps}

We now turn to scenarios in which  the overlaps of the prepared states are symmetrically bounded by $\abs{\braket{\psi_x}{\psi_{x'}}} \geq \beta$,  $\forall x,x'$. We consider a measurement device with a single setting ($y=0$) and construct the state-discrimination witness $W_n=\frac{1}{n}\sum_x p(x|x,0)$. 
The corresponding deterministic bounds are obtained by finding the maximum $W_n$ for which the SDP in \eqref{eq:SDP_gamm_det} renders a feasible solution, under the assumption that at most $\abs{X^\ast} = n^\ast$ inputs contribute deterministically.

As an illustration, we focus on $n=5$. \figref{fig:ENS_QSD_PN} shows the optimal discrimination probability as a function of the overlap parameter for different values of $n^\ast$. Owing to the symmetry of the constraints, the bounds depend only on the cardinality of the subset $X^\ast$, and not on its specific choice.

In \appref{app.qsd}, we further analyze this optimization problem and derive a closed-form expression for the case $n^\ast=1$, 
\begin{align}\label{eq:det_bound_sd}
W_{n} \leq \frac{1}{n}\left(1+\frac{\Delta^2}{n-1}\right) 
\end{align}
where $\Delta = \sqrt{1-\beta}\left((n-2)+\sqrt{1+(n-1)\beta}\right)$. 
Curiously, the bound in \eqref{eq:det_bound_sd} for $n=2$ coincides with the maximum success probability compatible with a non-contextual model \cite{carceller2024}. \\

\subsection{Classicality with respect to an observable} 

Finally, we demonstrate the versatility of our approach under observable constraints applying it to the framework with photon-number restrictions developed in Ref.~\cite{carceller2025_photon}. The relevant observable in this case corresponds to $O_{ix}=\mathds{1}-\ketbra*{i}{i}$ in the photon-number basis. We consider $n$ coherent-state preparations with equal average photon-number $\abs{\alpha}^2$, where $\bra{i}\rho_x\ket{i} \geq e^{-\abs{\alpha}^2}\frac{\abs{\alpha}^{2i}}{i!}$ is assumed to follow Poissonian statistics. 

Under these assumptions, we again consider the state-discrimination witness $W_n$ and compute its maximal value compatible with the feasibility of the SDP in \eqref{eq:SDPrelax_obs}, allowing for at most $|X^\ast|=n^\ast$ state-inputs that contribute deterministically. We illustrate the obtained bounds in \figref{fig:ENS_QSD_PN} by choosing $n=4$ and restricting up to $k=2$ photon-number components.

\section{Conclusion}

We have advanced in the problem of certifying non-classical correlations in prepare-and-measure scenarios. Addressing the problem from an adversarial perspective, we classify correlations as classical when the measurement outcomes can be predicted by a hypothetical adversary with access to the underlying deterministic mechanisms. This approach is inspired by the complementary QRNG problem: rather than estimating how predictable are the measurement outcomes, we ask whether the correlations admit a model with predictable outcomes. Thanks to this parallelism, we adapt already existing QRNG techniques for certifying non-classicality in the prepare-and-measure scenarios under various communication restrictions.

This approach reveals two clear conceptual advantages. Thus far, the problem of deciding whether correlations in the prepare-and-measure scenario admit a classical model has been only explored in dimension-restricted frameworks \cite{Poderini2020,degois2021,divi2023}. Our approach instead, can naturally inherit already existing QRNG methodology and adapt it to certify non-classicality under a wide range of communication constraints. Another conceptual consequence of our approach is that it reveals a nested structure of classicality in prepare-and-measure correlations. Conventionally, classicality is treated as a binary property: a correlation is either classical, in the sense that it admits a classical model for all settings, e.g.~\eqref{eq:SC}, or it is non-classical. Within our approach, however, classicality is associated with outcome predictability for specific choice of inputs. This reveals a nested classicality structure: a correlation can only be classical for some input choices, leading to a hierarchy of intermediate levels of non-classical behaviours.

\acknowledgements

The authors thank Leonardo Zambrano and Armin Tavakoli for useful feedback on the manuscript. This project has received funding from the European Union’s Horizon 2020 research and innovation programme under the Marie Skłodowska-Curie grant agreement No 101262877, Project TPSQCrypto. Views and opinions expressed are however those of the author(s) only and do not necessarily reflect those of the European Union or European Research Executive Agency. Neither the European Union nor the granting authority can be held responsible for them. N.D. and C.R.C. are financially supported by the Swedish Foundation for Strategic Research.

\bibliography{qrng_sd}

\appendix
\section{Semidefinite relaxations for classicality}
\label{app:relaxations}

The certification methods of the main text all reduce to a single decision problem. Given a set of observed correlations $p(b|x,y)$ and one of the communication assumptions $\mathcal{S}\in\{\mathcal{S}_\mathcal{E},\mathcal{S}_{\beta},\mathcal{S}_O\}$, we ask whether the data is compatible with a quantum model obeying $\mathcal{S}$, i.e. whether $p(b|x,y)\in\mathcal{Q}_\mathcal{S}$. Equivalently, we search for a hidden-variable decomposition $\{q(\lambda),\rho_x^\lambda,M_{b|y}^\lambda\}$ that reproduces $p(b|x,y)$ while fulfilling $\mathcal{S}$. This is a non-linear feasibility problem in the unknown states and measurements.
 
The strategy common to the three cases is to relax this problem to a hierarchy of semidefinite programs (SDPs) through a moment description. One collects a list $L$ of operators relevant to the scenario, closes it under products into a set $S\supseteq L$, and arranges the moments of the elements of $S$ into a matrix that is positive semidefinite by construction. The defining identities of the underlying quantum model, normalization and positivity of states and measurements, the assumption $\mathcal{S}$, and the observed statistics, all become linear relations among the entries of this matrix. The feasibility problem is thereby relaxed to an SDP, whose infeasibility certifies that no quantum model compatible with $\mathcal{S}$ can reproduce $p(b|x,y)$.
 
Two nested problems are addressed in this way. The first is membership in $\mathcal{Q}_\mathcal{S}$: feasibility of the SDP certifies that the correlations belong to an outer approximation $\mathcal{Q}_\mathcal{S}'\supseteq\mathcal{Q}_\mathcal{S}$, which can be tightened towards $\mathcal{Q}_\mathcal{S}$ by enlarging $S$. The second, and central to this work, is membership in the classical set $\mathcal{C}_\mathcal{S}^{(x,y)}$. By Observation~\ref{obs:pg_det}, the correlations admit a classical model for the selected inputs if and only if the measurement outcomes are perfectly predictable by the adversary, $p_g^{(x,y)}=1$. Imposing this predictability as an additional linear constraint on the moments yields an SDP whose infeasibility certifies the non-classicality of the observed correlations for those inputs. Throughout, $\lambda$ labels the hidden-variable. When predictability is imposed over a set of selected inputs $X^*\subseteq X,Y^*\subseteq Y$, it is convenient to let $\lambda=\{\lambda_{xy}\}_{x,y}$ run over the deterministic assignments that fix the outcome $b=\lambda_{xy}$ for each selected pair.
 
The three subsections below instantiate this program for each assumption, using the methodology best suited to it: a block-moment matrix of fixed operator dimension for the fixed ensemble (\secref{app:ensemble}), a Gram matrix of a purified model for bounded overlaps (\secref{app:overlaps}), and a tracial moment matrix for restricted observables (\secref{app:observables}).

\subsection{Fixed ensemble}
\label{app:ensemble}

Under the assumption $\mathcal{S}_\mathcal{E}$, the prepared ensemble is known and is supported by a Hilbert space of limited dimension $d$. The correlations of interest can then be generated by $d\times d$ operators that behave as valid quantum states and measurements. The decision problem then amounts to search for a valid decomposition $\rho_x=\sum_\lambda q(\lambda)\rho_x^\lambda$ that reproduces $p(b|x,y)$. To solve this, we invoke the block-moment matrix relaxation from Ref~\cite{dalessandro2026}, based on bounding the set of quantum correlations generated by operator-blocks of a fixed dimension $d$.
 
The method works as follows. Let $L^\lambda=\{ \mathds{1}, \{\rho_x^\lambda\}_x, \{M_{b|y}^\lambda\}_{b,y} \}$ be the list of operators relevant to the scenario under scrutiny for every possible value of $\lambda$. We then define the set $S^\lambda\supseteq L^\lambda$ containing products of elements in $L^\lambda$, through which we build the block-moment matrix
\begin{equation}
    \Gamma^\lambda = q_\lambda\sum_{u,v\in S^\lambda} uv^\dagger \otimes \ketbra{i_u}{i_v},
    \label{eq:bmm}
\end{equation}
where $i_u(i_v)$ defines the position of the element $u(v)$ in $S^\lambda$ and we have conveniently absorbed $q_\lambda\equiv q(\lambda)$ in the definition of $\Gamma^\lambda$. The matrix $\Gamma^\lambda$ is positive semidefinite by construction. Furthermore the observed correlations and relevant constraints on the problem can be expressed as linear combinations of elements of $\Gamma^\lambda$. To see this, consider the simplest case $S^\lambda=L^\lambda$ $\forall\lambda$. For any given $\lambda$, the block-moment matrix then becomes:
\begin{equation}
    \Gamma^\lambda=q_\lambda\begin{pmatrix}
        \mathds{1} & \rho_x^\lambda & M_{b|y}^\lambda\\[1mm]
        \cdot & \rho_x^\lambda\rho_{x'}^\lambda & \rho_x^\lambda M_{b|y}^\lambda\\[1mm]
        \cdot & \cdot & M_{b|y}^\lambda M_{b'|y'}^\lambda
    \end{pmatrix},
    \label{eq:bmm_level1}
\end{equation}
where only the upper diagonal was displayed since the matrix is Hermitian. Every element of $\Gamma^\lambda$ has to be understood as a block composed of $d\times d$ variables, which we index by the operators of $S^\lambda$ to which they correspond. To simplify notation, we denote $\Gamma^\lambda_{\rho_x,M_{b|y}}$ the block corresponding to $q_\lambda\,\rho_x^\lambda (M_{b|y}^\lambda)^\dagger$, and analogously for the remaining slots, with the hidden-variable dependence carried by the block superscript $\lambda$. These variables are used to mirror the properties of measurements and preparations. Especially for our purpose, we fix the average state using the linear equality $\sum_\lambda \Gamma^\lambda_{\rho_x, \mathds{1}}=\rho_x$, with $\Gamma^\lambda_{\rho_x, \mathds{1}}\succeq 0$ and $\text{tr}(\Gamma^\lambda_{\rho_x, \mathds{1}})=1$, following from state positivity and normalization. The relevant correlations appear as $p(\lambda,b|x,y)=\tr( \Gamma^\lambda_{\rho_x, M_{b|y}} )$. Positivity and normalization of the measurements are imposed through $\Gamma^\lambda_{M_{b|y}, \mathds{1}}\succeq 0$ and $\sum_{b,\lambda}\Gamma^\lambda_{M_{b|y}, \mathds{1}}=\mathds{1}$. Any set of correlations $p(b|x,y)\in\mathcal{Q}_{\mathcal{S}_{\mathcal E}}$ and average ensemble $\mathcal{E}=\{\rho_x\}_x$ is compatible with a set of realizable $\{\Gamma^\lambda\}_\lambda$. The membership problem can therefore be relaxed to the SDP feasibility problem,
\begin{align}\label{eq:sdp_ensemble}
\mathrm{find} 
& \quad \{\Gamma^\lambda\}_\lambda \\
\mathrm{s.t.}
& \quad \Gamma^\lambda\succeq 0, \nonumber \\
& \quad \sum_\lambda \Gamma^\lambda_{\rho_x, \mathds{1}}=\rho_x, \nonumber \\
& \quad p(b|x,y)=\sum_\lambda\text{tr}( \Gamma^\lambda_{\rho_x^, M_{b|y}} ), \nonumber
\end{align}
where the elements of $\Gamma^\lambda$ are constrained as stated above. Feasibility of the SDP in~\eqref{eq:sdp_ensemble} implies $p(b|x,y)\in\mathcal{Q}_{\mathcal{S}_\mathcal{E}}'$, with $\mathcal{Q}_{\mathcal{S}_\mathcal{E}}'\supseteq\mathcal{Q}_{\mathcal{S}_\mathcal{E}}$ denoting an outer approximation of the quantum set under the fixed-ensemble assumption. By including additional monomials in $S^\lambda$, one obtains a hierarchy of increasingly tight SDP relaxations that can be used to characterize $\mathcal{Q}_{\mathcal{S}_\mathcal{E}}$.
 
The same framework can be applied to the characterization of $\mathcal{C}_{\mathcal{S}_\mathcal{E}}^{(x,y)}$ for a selected choice of inputs $x\in X^\ast$, $y\in Y^\ast$. To that end, one only needs to accommodate the perfect-predictability condition of Observation~\ref{obs:pg_det}. For a particular input pair $(x,y)$ the hidden-variable can be identified with the predicted outcome, $\lambda=b$, and the condition amounts to adding $\sum_b \text{tr}( \Gamma^b_{\rho_{x^*}, M_{b|y^*}} )=1$. More generally, letting $\lambda=\{\lambda_{xy}\}_{x,y}$ index the deterministic assignments that set $b=\lambda_{xy}$ for every selected pair, predictability becomes the linear constraint $\sum_\lambda \text{tr}( \Gamma^\lambda_{\rho_x, M_{\lambda_{xy}|y}} )=1$. Collecting it together with the membership constraints, the decision problem for $\mathcal{C}_{\mathcal{S}_\mathcal{E}}^{(x,y)}$ takes the form of the feasibility program
\begin{align}
\text{find} & \quad \{\Gamma^{\bm{\lambda}}\}_{\bm{\lambda}}\\
\text{s.t.} & \quad \Gamma^{\bm{\lambda}}\succeq 0, \quad \sum_{\bm{\lambda}}\Gamma^{\bm{\lambda}}_{\rho_x,\mathds{1}}=\rho_x \nonumber \\
& \quad p(b|x,y)=\sum_{\bm{\lambda}} \tr(\Gamma^{\bm{\lambda}}_{\rho_x, M_{b|y}})\nonumber \\
& \quad \sum_{\bm{\lambda}} \tr(\Gamma^{\bm{\lambda}}_{\rho_x, M_{\lambda_{xy}|y}})=1, \quad x\in X^\ast, \ y\in Y^\ast \nonumber ,
\label{eq:appendix_4}
\end{align}
which is precisely the program quoted in the main text. Its infeasibility certifies the non-classicality of the observed correlations $p(b|x,y)$ for the selected inputs.

\subsection{Bounded overlaps}
\label{app:overlaps}

We now address the assumption $\mathcal{S}_\beta$, in which the preparations are pure, $\rho_x^\lambda=\ketbra{\psi_x^\lambda}{\psi_x^\lambda}$, and are characterized only through lower bounds on their averaged pairwise overlaps,
\begin{equation}
    \sum_\lambda q(\lambda)\,\lvert\braket{\psi_x^\lambda}{\psi_{x'}^\lambda}\rvert\ge\beta_{xx'}.
    \label{eq:overlap_assumption}
\end{equation}
Since no Hilbert-space dimension is assumed here, we resort to the Gram-matrix hierarchy of Ref.~\cite{wang2019}, which characterizes the correlations directly at the level of inner products between the vectors of a purified model.
 
We recall the purified description introduced in the main text. We consider the purified sates and measurement operators on the extended Hilbert space,
\begin{align}
    \ket{\Psi_x}&=\sum_\lambda\sqrt{q(\lambda)}, \ket{\psi_x^\lambda}\otimes\ket{\lambda}\otimes\ket{\lambda}_{\mathcal{H}_E},\label{eq:dilation_state}\\
    B_{b|y}&=\sum_\lambda M_{b|y}^\lambda\otimes\ketbra{\lambda}{\lambda},\qquad E_\lambda=\ketbra{\lambda}{\lambda},\label{eq:dilation_meas}
\end{align}
where $E_\lambda$ acts on the register $\mathcal{H}_E$ accessible to the adversary, so that $[B_{b|y},E_\lambda]=0$. With this dilation, the joint adversarial statistics read $p(\lambda,b|x,y)=\bra{\Psi_x}B_{b|y}\otimes E_\lambda\ket{\Psi_x}=q(\lambda)\,\text{tr}(\rho_x^\lambda M_{b|y}^\lambda)$.

The relaxation is built from the operator list $L=\{\,\mathds{1},\,\{B_{b|y}\otimes \mathds{1}_E\}_{b,y},\,\{\mathds{1}_B\otimes E_{\lambda}\}_\lambda\,\}$, collecting the vectors obtained by the action of products of elements of $L$ on the preparations $\{\ket{\Psi_x}\}_x$. For notational convenience, we write $B_{b|y}\otimes\mathds{1}_E\rightarrow B_{b|y}$ and $\mathds{1}_B\otimes E_{\lambda}\rightarrow E_{\lambda}$, and replace the explicit tensor-product structure with the commutation relations $[B_{b|y},E_{\lambda}]=0$ for all $b$, $y$ and $\lambda$. The Gram matrix $G$ is then defined through the entries $G_{u,v}=\braket{u}{v}$ for $u,v\in S$. By construction $G\succeq 0$. At first level, $S=\{\ket{\Psi_x}\}_x\cup\{B_{b|y}\ket{\Psi_x}\}_{b,x,y}\cup\{E_\lambda\ket{\Psi_x}\}_{\lambda,x}$ and the Gram matrix becomes
\begin{align}\label{eq:Gmat}
&G = \nonumber \\ 
&\begin{pmatrix}
\braket{\Psi_x}{\Psi_{x'}\!} & \bra{\Psi_x}\!B_{b'|y'}\!\ket{\Psi_{x'}\!}  & \bra{\Psi_x}\!E_{\lambda'}\!\ket{\Psi_{x'}\!} \\
\cdot &\!\! \bra{\Psi_x}\!B_{b|y}B_{b'|y'}\!\ket{\Psi_{x'}\!} \!\!&\!\!\! \bra{\Psi_x}\!B_{b|y}E_{\lambda'}\!\ket{\Psi_{x'}\!}\!\! \\
\cdot & \cdot & \bra{\Psi_x}\!E_{\lambda} E_{\lambda'}\!\ket{\Psi_{x'}\!}
\end{pmatrix}.
\end{align}
where each entry should be understood as an indexed block.

The operator identities of the model, namely $\sum_b B_{b|y}=\mathds{1}$, $B_{b|y}\succeq 0$, the orthogonality $E_\lambda E_{\lambda'}=\delta_{\lambda\lambda'}E_\lambda$ with $\sum_\lambda E_\lambda=\mathds{1}$, and the commutation $[B_{b|y},E_\lambda]=0$, all translate into linear relations among the entries of $G$. In particular, normalization of the dilated states gives $G_{\ket{\Psi_x},\ket{\Psi_x}}=1$, while the off-diagonal entries reproduce the averaged overlaps, $G_{\ket{\Psi_x},\ket{\Psi_{x'}}}=\sum_\lambda q(\lambda)\braket{\psi_x^\lambda}{\psi_{x'}^\lambda}$. The assumption~\eqref{eq:overlap_assumption} becomes the linear constraint $G_{\ket{\Psi_x},\ket{\Psi_{x'}}}\ge\beta_{xx'}$, where we considered that we can w.l.g. reduce the entries of the Gram matrix to be real and therefore ignore the absolute value. Measurement completeness imposes $\sum_b G_{u,B_{b|y}\ket{\Psi_x}}=G_{u,\ket{\Psi_x}}$ for all $u\in S$, the adversarial register satisfies $\sum_\lambda G_{u,E_\lambda\ket{\Psi_x}}=G_{u,\ket{\Psi_x}}$ and $G_{E_\lambda\ket{\Psi_x},E_{\lambda'}\ket{\Psi_{x'}}}=\delta_{\lambda\lambda'}\,G_{E_\lambda\ket{\Psi_x},\ket{\Psi_{x'}}}$, and the commutation $[B_{b|y},E_\lambda]=0$ translates into $G_{B_{b|y}\ket{\Psi_x},E_\lambda\ket{\Psi_{x'}}}=G_{E_\lambda\ket{\Psi_{x'}}, B_{b|y}\ket{\Psi_x}}$ $\forall \ x,x'$. The observed correlations finally appear as $p(b|x,y)=\bra{\Psi_x}B_{b|y}\ket{\Psi_x}=G_{\ket{\Psi_x},B_{b|y}\ket{\Psi_x}}$.
The membership problem is thus relaxed to the feasibility SDP
\begin{align}
\mathrm{find} 
& \quad G \label{eq:sdp_overlap} \\
\mathrm{s.t.}
& \quad G\succeq 0,\quad G_{\ket{\Psi_x},\ket{\Psi_{x'}}}\ge\beta_{xx'}, \nonumber \\
& \quad p(b|x,y)=G_{\ket{\Psi_x},B_{b|y}\ket{\Psi_x}}, \nonumber
\end{align}
whose feasibility certifies $p(b|x,y)\in\mathcal{Q}_{\mathcal{S}_{\beta}}'\supseteq\mathcal{Q}_{\mathcal{S}_{\beta}}$, an outer approximation of the quantum set compatible with the overlap assumption. Enlarging $S$ with higher-order products of $B_{b|y}$ and $E_\lambda$ yields a convergent hierarchy. 

To characterize $\mathcal{C}_{\mathcal{S}_{\beta}}^{(x,y)}$ we add the predictability condition of Observation~\ref{obs:pg_det}. The adversary guesses correctly when $\lambda=b$, with probability $G_{E_b\ket{\Psi_x},B_{b|y}\ket{\Psi_x}}=p(\lambda=b,b|x,y)$. Perfect predictability for the selected inputs, $p_g^{(x,y)}=1$, is therefore a unit guessing probability, $p_g^{(x,y)}=\sum_b G_{E_b\ket{\Psi_x},B_{b|y}\ket{\Psi_x}}=1$, and the decision problem for $\mathcal{C}_{\mathcal{S}_{\beta}}^{(x,y)}$ becomes
\begin{align}
\text{find} & \quad G \\
\text{s.t.} & \quad G\succeq 0, \quad G_{\ket{\Psi_x},\ket{\Psi_{x'}}} \geq \beta_{xx'} \nonumber\\
& \quad p(b|x,y) = G_{\ket{\Psi_x},B_{b|y}\ket{\Psi_x}} \nonumber\\
& \sum_b G_{E_b\ket{\Psi_x},B_{b|y}\ket{\Psi_x}}=1,\quad x\in X^*,\, y\in Y^*, \nonumber
\end{align}
which reproduces the program in the main text. Its infeasibility certifies the non-classicality of the observed correlations $p(b|x,y)$ for the selected inputs.

\subsection{Restricted observables}
\label{app:observables}
Finally, we treat the assumption $\mathcal{S}_O$, in which the preparations are constrained only through bounds on the expectation values of a set of Hermitian observables $\{O_{ix}\}_{i,x}$,
\begin{equation}
    \sum_\lambda q(\lambda)\,\text{tr}( \rho_x^\lambda O_{ix} )\le\omega_{ix}.
    \label{eq:observable_assumption}
\end{equation}
Since the assumptions, the correlations and the predictability conditions are all expectation values, it is natural to work directly with the trace functional through the tracial moment hierarchy of Refs.~\cite{pauwels2022,carceller2025_photon}.
 
The relaxation is built from the operator list $L=\{\,\mathds{1},\,\{\rho_x\}_x,\,\{M_{b|y}\}_{b,y},\,\{O_{ix}\}_{i,x}\,\}$, collecting the monomials obtained from products of its elements, $S\supseteq L$. For every value of $\lambda$ we define the tracial moment matrix through the scalar entries $\Upsilon^\lambda_{u,v}=q(\lambda)\,\text{tr}( uv^\dagger )$, for $u,v\in S$. Also in this case, the matrix is positive semidefinite by construction. At first level, $S=L$ and the moment matrix becomes
\begin{align}\label{eq:tracial_mat}
&\Upsilon^\lambda = \\
&q(\lambda)\begin{pmatrix}
\text{tr}\,(\mathds{1}) & \text{tr}\,(\rho_x^\lambda) & \text{tr}\,(M_{b|y}^\lambda) & \text{tr}\,(O_{ix}) \\
\cdot & \text{tr}(\rho_x^\lambda \rho_{x'}^\lambda) & \text{tr}(\rho_x^\lambda M_{b|y}^\lambda) & \text{tr}(\rho_x^\lambda O_{ix}) \\
\cdot & \cdot & \text{tr}(M_{b|y}^\lambda M_{b'|y'}^\lambda) & \text{tr}(M_{b|y}^\lambda O_{ix}) \\
\cdot & \cdot & \cdot & \text{tr}(O_{ix} O_{i'x'})
\end{pmatrix}, \nonumber
\end{align}
where, as before, each entry is indexed by the corresponding operators of $S$, $\Upsilon^\lambda_{u,v}$. The operator relations of the model are included as linear relations on the moments. Normalization gives $\Upsilon^\lambda_{\rho_x,\mathds{1}}=q(\lambda)$, measurement completeness gives $\sum_b\Upsilon^\lambda_{u,M_{b|y}}=\Upsilon^\lambda_{u,\mathds{1}}$ for all $u\in S$. The observable bounds~\eqref{eq:observable_assumption} read $\sum_\lambda\Upsilon^\lambda_{\rho_x,O_{ix}}\le\omega_{ix}$, while the observed statistics appear as $p(b|x,y)=\sum_\lambda\Upsilon^\lambda_{\rho_x,M_{b|y}}$. 
The membership problem is thus relaxed to the feasibility SDP
\begin{align}
\text{find} & \quad \{\Upsilon^\lambda\}_\lambda \\
\text{s.t.} & \quad \Upsilon^\lambda\succeq 0, \nonumber\\
& \quad \sum_\lambda\Upsilon^\lambda_{\rho_x,O_{ix}}\le\omega_{ix}, \nonumber\\
& \quad p(b|x,y)=\sum_\lambda\Upsilon^\lambda_{\rho_x,M_{b|y}}, \nonumber
\label{eq:sdp_observable}
\end{align}
whose feasibility certifies $p(b|x,y)\in\mathcal{Q}_{\mathcal{S}_O}'\supseteq\mathcal{Q}_{\mathcal{S}_O}$, an outer approximation of the quantum set compatible with the observable bounds. Enlarging $S$ with higher-order products yields an increasingly precise hierarchy of relaxations of the quantum set. 

To characterize $\mathcal{C}_{\mathcal{S}_O}^{(x,y)}$ we just need to add the predictability condition of Observation~\ref{obs:pg_det}. Letting $\lambda=\{\lambda_{xy}\}_{x,y}$ index the deterministic assignments that fix the outcome $b=\lambda_{xy}$ for every selected pair, perfect predictability becomes the linear constraint $\sum_\lambda\Upsilon^\lambda_{\rho_x,M_{\lambda_{xy}|y}}=1$, and the decision problem for $\mathcal{C}_{\mathcal{S}_O}^{(x,y)}$ becomes
\begin{align}
\text{find} & \quad \{\Upsilon^\lambda\}_\lambda \\
\text{s.t.} & \quad \Upsilon^\lambda\succeq 0, \nonumber\\
& \quad \sum_\lambda\Upsilon^\lambda_{\rho_x,O_{ix}}\le\omega_{ix} \nonumber\\
& \quad p(b|x,y) = \sum_\lambda\Upsilon^\lambda_{\rho_x,M_{b|y}} \nonumber\\
& \quad \sum_\lambda\Upsilon^\lambda_{\rho_x,M_{\lambda_{xy}|y}}=1,\quad x\in X^*,\, y\in Y^*, \nonumber
\end{align}
which reproduces the program in the main text. Its infeasibility certifies the non-classicality of the observed correlations $p(b|x,y)$ for the selected inputs.

\section{Quantum state discrimination with bounded overlaps}
\label{app.qsd}

In this Appendix, we derive the analytical bound for the success probability of quantum state discrimination of $n$ states compatible with predictable measurement outcomes for a single input $\abs{X^\ast}=1$.

Our goal is to solve the following optimization problem,
\begin{align}
    \max_{\{\rho_x^\lambda\},\{M_{b}^\lambda\}} & \quad \frac{1}{n}\sum_{x=1}^{n} \sum_\lambda q(\lambda) \tr\!\left(\rho_x^\lambda M_x^\lambda\right) \\
    \text{s.t.} & \quad \rho_x^\lambda = \ketbra{\psi_x^\lambda}{\psi_x^\lambda} , \ \sum_\lambda q(\lambda) \abs{\braket*{\psi_x^\lambda}{\psi_{x'}^\lambda}} \geq \beta , \nonumber \\
    & \quad \sum_\lambda q(\lambda) \tr\!\left(\rho_{x^\ast}^\lambda M_{\lambda}^\lambda\right) = 1 , \nonumber
\end{align}
where it is left implicit that the optimization runs over all valid quantum state operators $\rho_x^\lambda$ and POVMs $M_{b}^\lambda$, and for a concrete choice of $x^\ast\in X$. Given that the only relevant restriction on the state preparations is a lower-bound on the pair-wise overlap symmetrically imposed across all pairs of $x \neq x'$, we can without loss of generality fix e.g.~$x^\ast = 1$.

First of all, we note that at the optimum point, the overlap constraint saturates to $\abs{\!\braket*{\psi_x^\lambda}{\psi_{x'}^\lambda}}=\beta$, $\forall \lambda$. To see this, consider
\begin{align}
    &\max_{\{\rho_x^\lambda\},\{M_{b}^\lambda\}} \sum_\lambda q(\lambda) \frac{1}{n}\sum_{x=1}^{n} \tr\!\left(\rho_x^\lambda M_x^\lambda\right) \\
    &\leq \sum_\lambda q(\lambda) \max_{\{\rho_x^\lambda\},\{M_{b}^\lambda\}} \frac{1}{n}\sum_{x=1}^{n} \tr\!\left(\rho_x^\lambda M_x^\lambda\right) = \sum_\lambda q(\lambda) P(c_\lambda) \nonumber
\end{align}
where $P(c_\lambda)=\max_{\{\rho_x^\lambda\},\{M_{b}^\lambda\}} \frac{1}{n}\sum_{x=1}^{n} \tr\!\left(\rho_x^\lambda M_x^\lambda\right)$ (such that $\abs*{\!\braket*{\psi_x^\lambda}{\psi_{x'}^\lambda}}\geq c_\lambda$, $\forall x,x'$) corresponds to the Helstrom bound for the realization $\lambda$. Given that $P(c_\lambda)$ is concave on $c_\lambda$,
\begin{align}
    \sum_\lambda q(\lambda) P(c_\lambda) \leq P\left( \sum_\lambda q(\lambda) c_\lambda \right) = P(\beta) \ ,
\end{align}
and thus, the optimum point corresponds to the discrimination of $n$ states with the same overlap $\beta$, i.e.~the constraint $\abs{\!\braket*{\psi_x^\lambda}{\psi_{x'}^\lambda}}=\beta$ is saturated.

To this end, all imposed constraints are unitarily invariant. This means that, for each $\lambda$, we can rotate to a reference frame in which the preparations are independent of $\lambda$. Namely, we define the states $\ket{\phi_x}$ such that $\ket{\psi_x^\lambda}=U_\lambda\ket{\phi_x}$. Under this transformation, the optimization problem converts to
\begin{align}
    \max_{\{\phi_x\},\{M_{b}^\lambda\}} & \quad \frac{1}{n}\sum_{x=1}^{n} \sum_\lambda q(\lambda) \tr\!\left(\phi_x M_x^\lambda\right) \\
    \text{s.t.} & \quad \phi_x = \ketbra{\phi_x}{\phi_x} , \ \abs{\!\braket*{\phi_x}{\phi_{x'}}} \geq \beta , \nonumber \\
    & \quad \sum_\lambda q(\lambda) \tr\!\left(\phi_1 M_{\lambda}^\lambda\right) = 1 , \nonumber
\end{align}
where we defined the rotated measurement operators $M_b^\lambda = U_\lambda^\dagger M_b^\lambda U_\lambda$.

Next, let us expand the objective function as,
\begin{align}
    W_n = & \frac{q(1)}{n} \left[\tr\!\left(\phi_1 M_1^1\right) + \tr\!\left(\phi_2 M_2^1\right) + \cdots + \tr\!\left(\phi_n M_n^1\right)\right] \nonumber \\
    + & \frac{q(2)}{n} \left[\tr\!\left(\phi_1 M_1^2\right) + \tr\!\left(\phi_2 M_2^2\right) + \cdots + \tr\!\left(\phi_n M_n^2\right)\right] \nonumber \\
    \cdots + & \frac{q(n)}{n} \left[\tr\!\left(\phi_1 M_1^n\right) + \tr\!\left(\phi_2 M_2^n\right) + \cdots + \tr\!\left(\phi_n M_n^n\right)\right]. \nonumber
\end{align}
The perfect outcome predictability constraint implies that $M_{\lambda}^{\lambda}\ket*{\psi_1^\lambda}=\ket*{\phi_1}$.  Because the dimension of the Hilbert space is unconstrained, any support of $M_\lambda^\lambda$ orthogonal to $\ket*{\phi_1}$ can be reassigned to the remaining measurement outcomes without decreasing the objective function. Hence, there is always an optimal solution in which every measurement is rank-$1$ projective, $M_{\lambda}^\lambda = \ketbra*{\phi_1}{\phi_1}$. Given the completeness relation $\sum_b M_b^\lambda = \mathds{1}$, and the fact that the dimension is kept completely unconstrained, we may write $M_b^\lambda = \ketbra*{v_b^\lambda}{v_b^\lambda}$, where $\{\ket{v_b^\lambda}\}_b$ forms a complete basis with $\braket*{v_b^\lambda}{v_{b'}^\lambda} = \delta_{b,b'}$, $\forall \lambda$. With this notation then, $\ket{v_\lambda^\lambda}=\ket{\phi_1}$ is fixed. This reduces the objective function to
\begin{align}
    W_n = & \frac{q(1)}{n} \left[1 + \tr\!\left(\phi_2 M_2^1\right) + \cdots + \tr\!\left(\phi_n M_n^1\right)\right] \\
    + & \frac{q(2)}{n} \left[0 + \beta^2 + \cdots + \tr\!\left(\phi_n M_n^2\right)\right] \nonumber \\
    \cdots + & \frac{q(n)}{n} \left[0 + \tr\!\left(\phi_2 M_2^n\right) + \cdots + \beta^2\right]. \nonumber
\end{align}
Now, we note that the problem remains unchanged if the following label-permutation is applied,
\begin{align}
    &\text{for} \ \lambda = 1 \quad \left\{\begin{array}{ll}
        x\rightarrow x\oplus 1 & \text{if} \ x\neq1 \ \& \ x\oplus1\neq 1   \\
        x\rightarrow x\oplus 2 & \text{if} \ x\neq1 \ \& \ x\oplus1 = 1   \\ 
        x\rightarrow x & \text{if} \ x=1  
    \end{array}\right. \\
    &\text{for} \ \lambda > 1 \quad \left\{\begin{array}{ll}
        x\rightarrow x\oplus 1 & \text{if} \ x\neq1,\lambda \ \& \ x\oplus1\neq 1,\lambda   \\
        x\rightarrow x\oplus 2 & \text{if} \ x\neq1,\lambda \ \& \ x\oplus1 = 1,\lambda   \\ 
        x\rightarrow x & \text{if} \ x=1 ,\lambda 
    \end{array}\right. \nonumber ,
\end{align}
where $\oplus$ denotes a sum modulo $n$. Namely, the cyclic permutation of the indices $x$ that do not land on $x=1$, and additionally to $x=\lambda$ for $\lambda >1$. By convexity, the mixture of different objective functions under such label permutations is also a solution to the problem. This lets us work with an averaged objective function over all permutations, leading to
\begin{align}
    W_n = & \frac{q(1)}{n} \left[1 + (n-1)\alpha^2\right]  \\
    + & \frac{q(2)}{n} \left[\beta^2 + (n-2)\gamma^2\right] \nonumber \\
    \cdots + & \frac{q(n)}{n} \left[\beta^2 + (n-2)\gamma^2\right]. \nonumber ,
\end{align}
where $\alpha^2 = \frac{1}{(n-1)}\sum_{x>1}\tr\!\left(\phi_x M_x^1\right)$, and $\gamma^2 = \frac{1}{n-2}\sum_{x>1,x\neq\lambda}\tr\!\left(\phi_x M_x^\lambda\right)$.

Finally, we need to find what values $\alpha$ and $\gamma$ can take. Let us focus on $\alpha$. For $\lambda=1$, the set $\{\ket{v_b^1}\}_b$ forms a complete orthonormal basis. In this basis, we can write
\begin{align}
    \ket{\psi_2^1} &= \beta \ket{v_1^1} + \alpha \ket{v_2^1} + g\sum_{j\neq 1,2}\ket{v_j^1} \\
    \ket{\psi_3^1} &= \beta \ket{v_1^1} + \alpha \ket{v_3^1} + g\sum_{j\neq 1,3}\ket{v_j^1}  \ , \nonumber
\end{align}
for $g = \sqrt{\frac{1-\beta^2-\alpha^2}{n-2}}$. The constraint $\braket{\phi_2}{\phi_3}=\beta$ can only be satisfied if and only if
\begin{align}\label{eq:alpha}
    \alpha = \frac{\sqrt{1-\beta}\left((n-2)+\sqrt{1+(n-1)\beta}\right)}{n-1} .
\end{align}
Using the same idea for $\lambda >1$, one concludes that $\alpha = \gamma$. At the end of the day thus, one finds that
\begin{align}
    W_n &\leq \frac{1}{n}\max\left\{1+(n-1)\alpha^2,\beta^2+(n-2)\gamma^2\right\} \\
    &= \frac{1}{n}\left(1+(n-1)\alpha^2\right) \ . \nonumber
\end{align}
Substituting $\alpha$ from \eqref{eq:alpha} yields the desired bound.

\end{document}